\begin{document}
\title{Automating Verification of State Machines \\ with Reactive Designs and Isabelle/UTP}

\titlerunning{Automating Verification of State Machines}

\author{Simon Foster$^{\text{\href{https://orcid.org/0000-0002-9889-9514}{ORCiD}}}$ \and James Baxter \and Ana Cavalcanti \and \\ Alvaro Miyazawa \and Jim Woodcock}

\authorrunning{Simon Foster et al.}

\institute{University of York}

\maketitle

\begin{abstract}
  State-machine based notations are ubiquitous in the description of component systems, particularly in the robotic
  domain. To ensure these systems are safe and predictable, formal verification techniques are important, and can be
  cost-effective if they are both automated and scalable. In this paper, we present a verification approach for a
  diagrammatic state machine language that utilises theorem proving and a denotational semantics based on Unifying
  Theories of Programming (UTP). We provide the necessary theory to underpin state machines (including induction
  theorems for iterative processes), mechanise an action language for states and transitions, and use these to formalise
  the semantics. We then describe the verification approach, which supports infinite state systems, and exemplify it
  with a fully automated deadlock-freedom check. The work has been mechanised in our proof tool, Isabelle/UTP, and so
  also illustrates the use of UTP to build practical verification tools.
\end{abstract}


\section{Introduction}
\label{sec:intro}
The recent drive for adoption of autonomous robots into situations where they interact closely with humans means that
such systems have become safety critical. To ensure that they are both predictable and safe within their applied
context, it is important to adequately prototype them in a variety of scenarios. Whilst physical prototyping is
valuable, there is a limit to the breadth of scenarios that can be considered. Thus, techniques that allow virtual
prototyping, based on mathematically principled models, can greatly enhance the engineering process. In particular,
formal verification techniques like model checking and theorem proving can enable exhaustive coverage of the state
space.

Diagrammatic notations are widely applied in component modelling, and particularly the modelling of robotic controllers
via state machines. Standards like UML\footnote{Unified Modelling Language. \url{http://www.uml.org/}} and
SysML\footnote{Systems Modelling Language. \url{http://www.omgsysml.org/}} provide languages for description of
component interfaces, the system architecture, and the behaviour of individual components. These notations have proved
popular due to a combination of accessibility and precise modelling techniques. In order to leverage formal verification
in this context, there is a need for formal semantics and automated tools. Since UML is highly extensible, a specific
challenge is to provide scalable semantic models that support extensions like real-time, hybrid computation, and
probability.

RoboChart~\cite{Miyazawa2017,Miyazawa2018RoboChart} is a diagrammatic language for the description of robotic
controllers with a formal semantics based on Hoare and He's \emph{Unifying Theories of Programming}~\cite{Hoare&98}
(UTP). The core of RoboChart is a formalised state machine notation that can be considered a subset of UML/SysML state
machine diagrams enriched with time and probability constructs. Each state machine has a well defined interface
describing the events that are externally visible. The behaviour of states and transitions is described using a formal
action language that corresponds to a subset of the \Circus modelling language~\cite{Oliveira&09}. The notation supports
real-time constraints, through delays, timeouts and deadlines, and also probabilistic choices, to express
uncertainty. The use of UTP, crucially, enables us to provide various semantic models for state machines that account
for different computational paradigms, and yet are linked through a common foundation.

In previous work~\cite{Miyazawa2017}, model checking facilities for RoboChart have been developed and applied in
verification. This provides a valuable automated technique for model development, which allows detection of problems
during the early development stages. However, explicit state model checking is limited to checking finite state
models. In practice this means that data types must be abstracted with a small number of elements. In order to
exhaustively check the potentially very large or infinite state space of many robotic applications, symbolic techniques,
like theorem proving, are required. For theorem proving to be practically applicable, like model checking, automation is
highly desirable.



In this paper we present an automated verification technique for a subset of RoboChart state machines in
Isabelle/HOL~\cite{Isabelle}. With it, state machines can be verified against properties formalised in a refinement
statement, such as deadlock freedom. We mechanise the state machine meta-model, including its data types,
well-formedness constraints, and validation support. We use a UTP theory of reactive designs~\cite{Foster17c,Foster18a}
to provide a dynamic semantics, based guarded iteration~\cite{Dijkstra75}. We also engineer automated proof support in
our UTP implementation, Isabelle/UTP~\cite{Foster16a}. The semantics can, therefore, be used to perform verification of
infinite-state systems by theorem proving, with the help of a verified induction theorem, and Isabelle/HOL's automated
proof facilities~\cite{Blanchette2011}. Our denotational approach, like UML, is extensible, and further mechanised UTP
theories can account for real-time~\cite{Sherif2010}, probability~\cite{Bresciani2012}, and other
paradigms~\cite{Foster16b}. Our work also serves as a template for building verification tools with Isabelle/UTP.

In \S\ref{sec:prelim} we outline background material for our work. In \S\ref{sec:foundations} we begin our contributions
by extending reactive designs with guarded iteration and an induction theorem for proving invariants. In
\S\ref{sec:actions} we mechanise reactive programs in Isabelle/UTP, based on the reactive-design theory, and provide
symbolic evaluation theorems. In \S\ref{sec:statsem} we mechanise a static semantics of state machines. In
\S\ref{sec:dynsem} we provide the dynamic semantics, utilising the result from \S\ref{sec:foundations}, and prove a
specialised induction law. In \S\ref{sec:verify} we outline our verification technique, show how to automatically prove
deadlock freedom for an example state machine. Finally, in \S\ref{sec:concl} we conclude and highlight related work.

\section{Preliminaries}
\label{sec:prelim}
\subsection{RoboChart}

RoboChart~\cite{Miyazawa2017} describes robotic systems in terms of a number of controllers that communicate using
shared channels. Each controller has a well defined interface, and its behaviour is described by one or more state
machines. A machine has local state variables and constants, and consists of nodes and transitions, with behaviour
specified using a formal action language~\cite{Oliveira&09}. Advanced features such as hierarchy, shared variables,
real-time constraints, and probability are supported.

A machine, \textsf{GasAnalysis}, is shown in Figure~\ref{fig:gasanalysis}; we use it as a running example. It models a
component of a chemical detector robot~\cite{Timmis2012} that searches for dangerous chemicals using its spectrometer
device, and drops flags at such locations. \textsf{GasAnalysis} is the component that decides how to respond to a sensor
reading. If gas is detected, then an analysis is performed to see whether the gas is above or below a given
threshold. If it is below, then the robot attempts to triangulate a position for the source location and turns toward
it, and if it is above, it stops.

\begin{figure}
  \vspace{-4ex}
  \begin{center}
    \includegraphics[width=\linewidth]{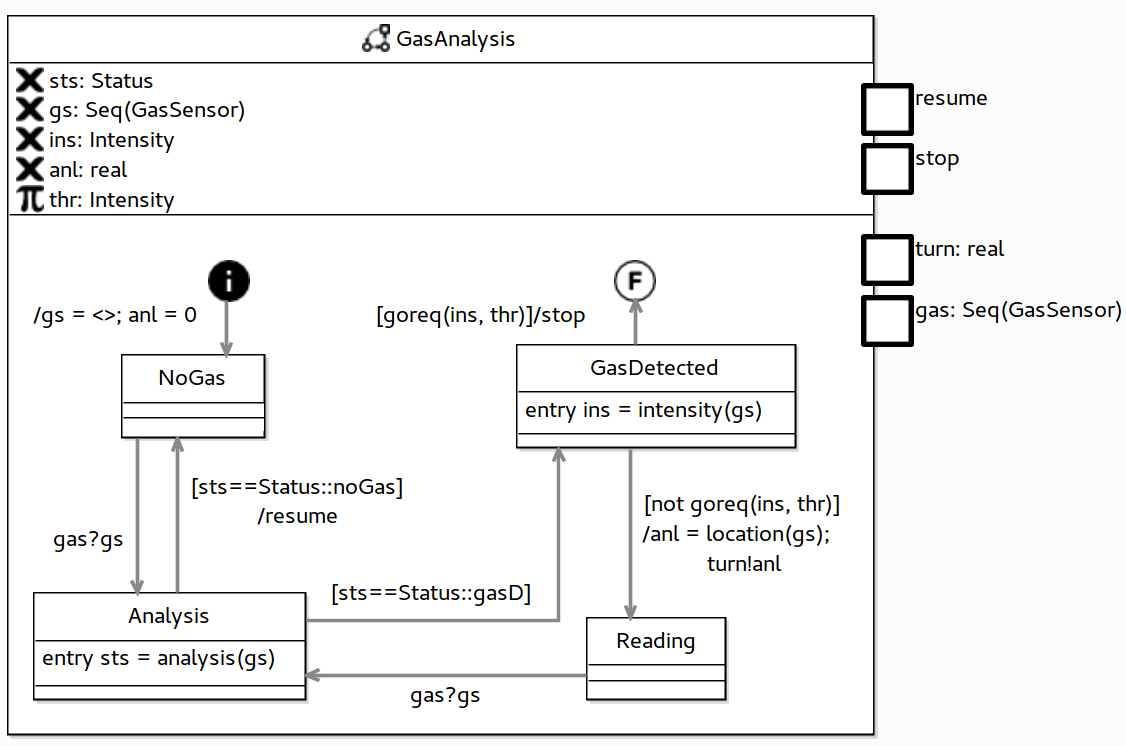}
  \end{center}
  \vspace{-4.2ex}
  \caption{\textsf{GasAnalysis} state machine in RoboChart}
  \label{fig:gasanalysis}
  \vspace{-3ex}
\end{figure}

The interface consists of four events. The event $gas$ is used to receive sensor readings, and $turn$ is used to
communicate a change of direction. The remaining events, $resume$ and $stop$ carry no data, and are used to communicate
that the robot should resume its searching activities, or stop. The state machine uses four state variables: $sts$ to
store the gas analysis status, $gs$ to store the present reading, $ins$ to store the reading intensity, and $anl$ to
store the angle the robot is pointing. It also has a constant $thr$ for the gas intensity threshold. RoboChart provides
basic types for variables and constants, including integers, real numbers, sets, and sequences (\textit{Seq(t)}). The
user can also define additional types, that can be records, enumerations, or entirely abstract. For example, the type
\textit{Status} is an enumerated type with constructors \textit{gasD} and \textit{noGas}.

The behaviour is described by 6 nodes, including an initial node (\textit{i}); a final node (\textit{F}); and four
states: \textit{NoGas}, \textit{Analysis}, \textit{GasDetected}, and \textit{Reading}. The transitions are decorated
with expressions of the form \textit{trigger}[\textit{condition}]/\textit{statement}. When the event \textit{trigger}
happens and the guard \textit{condition} is true, then \textit{statement} is executed, before transitioning to the next
state. All three parts can optionally be omitted. RoboChart also permits states to have entry, during, and exit
actions. In our example, both \textit{Analysis} and \textit{GasDetected} have entry actions.

The syntax of actions is given below, which assumes a context where event and state variable identifiers have been
specified using the nonterminal \textit{ID}.

\begin{definition}[Action Language Syntax]
  \begin{center}
    \begin{tabular}{rl}
      Action ~:=~~& Event | \ckey{skip} | ID := Expr | Action ; Action | \\[.5ex]
                  & \ckey{if}~Expr~\ckey{then}~Action~\ckey{else}~Action~\ckey{end} \\[1ex]
      Event  ~:=~~& ID | ID\,\ckey{?}\,ID | ID\,\ckey{!}\,Expr
    \end{tabular}
  \end{center}    
\end{definition}

\noindent An action is either an event, a \ckey{skip}, an assignment, a sequential composition, or a conditional. An
event is either a simple synchronisation on some identified event $e$, an input communication ($e?x$) that populates a
variable $x$, or an output event ($e!v$). We omit actions related to time and operations for now.

Modelling with RoboChart is supported by the Eclipse-based
RoboTool\footnote{\url{https://www.cs.york.ac.uk/circus/RoboCalc/robotool/}}, from which Figure~\ref{fig:gasanalysis}
was captured. RoboTool automates verification via model checking using FDR4, and its extension to incorporate the
verification approach presented here is ongoing work.

\subsection{Unifying Theories of Programming}

UTP~\cite{Hoare&98,Cavalcanti&06} is a framework for the formalisation of computational semantic domains that are used
to give denotational semantics to a variety of programming and modelling languages. It employs alphabetised binary
relations to model programs as predicates relating the initial values of variables ($x$) to their later values
($x'$). UTP divides variables into two classes: (1) program variables, that model data, and (2) observational variables,
that encode additional semantic structure. For example, $clock : \nat$ is a variable to record the passage of
time. Unlike a program variable, it makes no sense to assign values to $clock$, as this would model arbitrary time
travel. Therefore, observational variables are constrained using healthiness conditions, which are encoded as idempotent
functions on predicates. For example, application of $\healthy{HT}(P) \defs (P \land clock \le clock')$ results in a
healthy predicate that specifies there is no reverse time travel.

The observational variables and healthiness conditions give rise to a subset of the alphabetised relations called a
\emph{UTP theory}, which can be used to justify the fundamental theorems of a computational paradigm. A UTP theory is
the set of fixed points of the healthiness condition: $\theoryset{H} \defs \{P | \healthy{H}(P) = P\}$. A set of
signature operators can then be defined, under which the theory's healthiness conditions are closed, and are thus
guaranteed to construct programs that satisfy these theorems. UTP theories allow us to model a variety of paradigms
beyond simple imperative programs, such as concurrency~\cite{Hoare&98,Oliveira&09}, real-time~\cite{Sherif2010}, object
orientation~\cite{SCS06}, hybrid~\cite{Foster16b,Foster17b}, and probabilistic systems~\cite{Bresciani2012}.

The use of relational calculus means that the UTP lends itself to automated program verification using refinement
$S \refinedby P$: program $P$ satisfies specification $S$. Since both $S$ and $P$ are specified in formal logic, and
refinement equates to reverse implication, we can utilise interactive and automated theorem proving technology for
verification. This allows application of tools like Isabelle/HOL to program verification, which is the goal of our tool,
Isabelle/UTP~\cite{Foster16a}.

\subsection{Isabelle/HOL and Isabelle/UTP}

Isabelle/HOL~\cite{Isabelle} consists of the Pure meta-logic, and the HOL object logic. Pure provides a term language,
polymorphic type system, syntax translation framework for extensible parsing and pretty printing, and an inference
engine. The jEdit-based IDE allows \LaTeX-like term rendering using Unicode. An Isabelle theory consists of type
declarations, definitions, and theorems, which are usually proved by composition of existing theorems. Theorems have the
form of $\llbracket P_1; \cdots; P_n \rrbracket \Longrightarrow Q$, where $P_i$ is an assumption, and $Q$ is the
conclusion. The simplifier tactic, \textsf{simp}, rewrites terms using theorems of the form
$f(x_1 \cdots x_n) \equiv y$.

HOL implements an ML-like functional programming language founded on an axiomatic set theory similar to ZFC. HOL is
purely definitional: mathematical libraries are constructed purely by application of the foundational axioms, which
provides a highly principled framework. HOL provides inductive datatypes, recursive functions, and records. Several
basic types are provided, including sets, functions, numbers, and lists. Parametric types are written by precomposing
the type name, $\tau$, with the type variables $[a_1, \cdots, a_n]\tau$, for example $[nat]list$\footnote{The square
  brackets are not used in Isabelle; we add them for readability.}.

Isabelle/UTP~\cite{Feliachi2010,Foster16a} is a semantic embedding of UTP into HOL, including a formalisation of the
relational calculus, fundamental laws, proof tactics, and facilities for UTP theory engineering. The relational calculus
is constructed such that properties can be recast as HOL predicates, and then automated tactics, such as \textsf{auto},
and \textsf{sledgehammer}~\cite{Blanchette2011}, can be applied. This strategy is employed by our workhorse tactic,
\textsf{rel-auto}, which automates proof of relational conjectures.


Proof automation is facilitated by encoding variables as lenses~\cite{Foster16a}. A lens $x :: \tau \lto \alpha$
characterises a $\tau$-shaped region of the type $\alpha$ using two functions: $\lget_x :: \alpha \to \tau$ and
$\lput_x :: \alpha \to \tau \to \alpha$, that query and update the region, respectively. Intuitively, $x$ is a variable
of type $\tau$ within the alphabet type $\alpha$. Alphabet types can be encoded using the
$\ckey{alphabet}~r = f_1 :: \tau_1 \cdots f_n :: \tau_n$ command, that constructs a new record type $r$ with $n$ fields,
and a lens for each field. Lenses can be independent, meaning they cover disjoint regions, written $x \lindep y$, or
contained within another, written $x \lsubseteq y$. These allow us to express meta-logical style properties without
actually needing a meta-logic~\cite{Foster16a}.

The core UTP types include predicates $[\alpha]\upred$, and (homogeneous) relations $[\alpha]\uhrel$. Operators are
denoted using lenses and lifted HOL functions. An important operator is substitution, $\substapp{\sigma}{P}$, which
applies a state update function $\sigma :: \alpha \to \alpha$ to an expression, and replaces variables in a similar way
to syntactic substitution. Substitutions can be built using lens updates $\sigma(x \mapsto v)$, for
$x :: \tau \lto \alpha$ and $v :: [\tau, \alpha]\uexpr$, and we use the notation
$\lupdate{x_1 \mapsto v_1, \cdots, x_n \mapsto v_n}$ for a substitution in $n$ variables. Substitution theorems can be
applied with the simplifier to perform symbolic evaluation of an expression.

All the theorems and results that we present in this paper have been mechanically validated in Isabelle/UTP, and the
proofs can be found in our repository\footnote{\url{https://github.com/isabelle-utp/utp-main/tree/master/robochart/untimed}}.



\subsection{Stateful-Failure Reactive Designs}

RoboChart is a reactive language, where controllers exchange events with one another and the robotic platform or the
environment. Reactive programs can make decisions both internally, based on the evaluation of their own state, and
externally, by offering several events. Consequently, they pause at particular quiescent points during execution, when
awaiting a communication. Unlike sequential programs, they need not terminate but may run indefinitely.

The UTP theory of stateful-failure reactive designs~\cite{Foster18a,Foster-SFRD-UTP} exists to give denotational
semantics to reactive programming languages, such as CSP~\cite{Hoare&98}, \Circus~\cite{Oliveira&09}, and
rCOS~\cite{Zhan2008}. It is a relational version of the stable failures-divergences semantic model, as originally
defined in the UTP book~\cite{Hoare&98,Cavalcanti&06} using event traces and refusal sets, but extended with state
variables. Its healthiness condition, $\healthy{NCSP}$, which we previously mechanised~\cite{Foster18a}, characterises
relations that extend the trace, update variables, and refuse certain events in quiescent phases. The signature includes
unbounded nondeterministic choice ($\bigsqcap_{i \in I} P(i)$), conditional ($\conditional{P}{b}{Q}$), and sequential
composition ($P \relsemi Q$).  $\theoryset{NCSP}$ forms a complete lattice under $\refinedby$, with top element
$\Miracle$ and bottom $\Chaos$, and also a Kleene algebra~\cite{Foster18a}, which allows reasoning about iterative
reactive programs.

The signature also contains several specialised operators. Event action, $\Do{e}$, describes the execution of an event
expression $e$, that ranges over state variables. When activated, it waits for $e$ to occur, and then it
terminates. Generalised assignment ($\assigns{\sigma}$) uses a substitution $\sigma$ to update the state, following
Back~\cite{Back1998}. Basic assignment can be defined as $(x := v) \defs \assigns{\lupdate{x \mapsto v}}$, and a unit as
$\Skip \defs \assigns{id}$.  External choice, $\Extchoice i\!\in\!A @ P(i)$ indexed by set $A$, as in CSP, permits one
of the branches to resolve either by an event, or by termination. A binary choice $P \extchoice Q$ is denoted by
$\Extchoice X \in \{P, Q\} @ X$. A guard, $b \guard P$, executes $P$ when $b$ is true, and is otherwise equivalent to
$\Stop$, the deadlocked action. These operators obey several algebraic laws~\cite{Foster18a}; a small selection is
below.

\begin{theorem} \label{thm:ncsplaws}
  If $P$ is \healthy{NCSP}-healthy, then the following identities hold:

  \vspace{-3.5ex}
  \begin{align}
    \Miracle \relsemi P =~& \Miracle \label{thm:ncspmiracle} \\
    \assigns{\sigma} \relsemi P =~& \substapp{\sigma}{P} \label{thm:ncspasn} \\
    (\Do{a}\!\extchoice\!\Do{b})\!\relsemi\!P =~& \Do{a}\!\relsemi\!P \extchoice \Do{b}\!\relsemi\!P \label{thm:ncspdist}
  \end{align}
\end{theorem}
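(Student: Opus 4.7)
The plan is to reduce each identity to the stateful-failure reactive design normal form, in which an NCSP-healthy action is characterised by a triple of precondition, pericondition (quiescent behaviour), and postcondition (terminating behaviour) over traces and refusals. Once in normal form, each law becomes a relational equation that I would discharge using the \textsf{rel-auto} tactic of Isabelle/UTP, supplemented by trace algebra lemmas and closure facts about $\theoryset{NCSP}$ under the operators involved. Throughout, the assumption that $P$ is \healthy{NCSP}-healthy lets me rewrite $P$ in the triple form and exploit the standard composition laws for reactive designs.

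For (\ref{thm:ncspmiracle}), $\Miracle$ is the top of $\theoryset{NCSP}$ under $\refinedby$ and has the normal form whose pericondition and postcondition are both $\mathit{false}$. There is therefore no intermediate or terminating observation of $\Miracle$ that could be extended by $P$, so $\Miracle \relsemi P$ collapses to $\Miracle$ by the reactive analogue of $\mathit{false} \relsemi P = \mathit{false}$. For (\ref{thm:ncspasn}), the generalised assignment $\assigns{\sigma}$ is a reactive design that terminates immediately, leaves the trace unchanged, refuses nothing, and updates the state by $\sigma$; its pericondition is $\mathit{false}$. Unfolding $\assigns{\sigma} \relsemi P$ therefore only feeds through the terminating branch, on which composition with $P$ amounts to running $P$ from the $\sigma$-updated state, which is exactly $\substapp{\sigma}{P}$. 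The main side condition here is that substitution commutes with the NCSP reactive closure, which I would establish as a separate lemma and then apply.

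The main obstacle is (\ref{thm:ncspdist}). External choice does not distribute over sequential composition in general; the law relies essentially on both branches being event actions, which are \emph{productive} in the sense that every resolution must be witnessed by an event before any continuation can occur. I would expand $\Do{a} \extchoice \Do{b}$ into its normal form, in which the pericondition offers both $a$ and $b$ during the quiescent phase and the postcondition is the disjunction recording termination immediately after either event. Sequencing with $P$ on the right then has to be shown to split into the two disjuncts, without the refusal information of the two branches interfering after composition. I expect to handle this either by appealing to a previously established law stating that external choice of productive actions distributes over right-composition, or, failing that, by a manual case analysis on whether the composed action is still in its own quiescent phase or has progressed past one of the events into $P$, partitioning the trace at the event boundary. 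This trace-partitioning argument, together with showing that refusals on disjoint event histories do not leak across the composition, is where I expect to spend most of the proof effort.
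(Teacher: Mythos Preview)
The paper does not give a proof of this theorem at all: it presents the three identities as a ``small selection'' of algebraic laws imported from prior work (the cited reactive-designs paper) and simply notes that they have been mechanically validated in Isabelle/UTP. Your proposal is therefore not competing with an in-paper argument; it is essentially a reconstruction of what the underlying mechanised proofs would look like, and in that respect it is on target: reducing each operator to its reactive-contract normal form and discharging the resulting relational equalities with \textsf{rel-auto} is exactly the standard route in Isabelle/UTP, and your diagnoses of the three cases (miraculous postcondition, immediate-termination assignment, productivity-dependent distributivity) are all correct.

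One minor remark on (\ref{thm:ncspdist}): you are right that the law hinges on both branches being event actions, but the mechanised proof in the cited work proceeds via a general distributivity law for external choice over a family of prefixed actions rather than an ad-hoc trace-partitioning argument, so your first option (appeal to an existing productive-choice distributivity lemma) is the one that matches the actual development; the manual case analysis you sketch as a fallback would work but is more effort than needed.
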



\noindent\eqref{thm:ncspmiracle} states that miracle is a left annihilator for sequential composition. \eqref{thm:ncspasn} allows
us to push an assignment into a successor program by inserting a substitution. \eqref{thm:ncspdist} allows us to left
distribute through an external choice of events.

Our theory supports specifications using reactive contracts: $\rc{\!P\!}{\!Q\!}{\!R\!}$. It consists of three relations
over the state variables, trace variable ($\trace$), and refusal set variable ($\refu$). $P$ characterises assumptions
of the initial state and trace, $Q$ characterises quiescent behaviours, and $R$ characterises terminating
behaviours. Our previous result~\cite{Foster18a} shows that any reactive program can be denoted using a reactive
contract, which can be calculated by equational laws. This enables a verification strategy that checks refinements
between a specification and implementation contract, and has been implemented in a tactic called
\textsf{rdes-refine}~\cite{Foster17c}, that can be used to check for deadlock~\cite{Foster18a}, and which we employ in
this paper.

\section{Foundations for State Diagrams}
\label{sec:foundations}


In this section we extend the theory of reactive designs with constructs necessary to denote state machines, and prove
several theorems, notably an induction law for iterative programs. Although these programming constructs are rather
standard, we consider their semantics in the reactive programming paradigm, rather than in the standard sequential
programming setting. It is a pleasing aspect of our approach that standard laws hold in this much richer context.


State machines describe how to transition from one node to another. The main construct we use to denote them is a
reactive version of Dijkstra's guarded iteration statement~\cite{Dijkstra75} $\doiter{i\!\in\!I}{b(i)}{P(i)}$, which
repeatedly selects an indexed statement $P(i)$, based on whether its respective guard $b(i)$ is true. $I$ is an index
set, which when finite gives rise to the more programmatic form of
$\ckey{do} ~ b_1 \gcmd P_1 | \cdots | b_n \gcmd P_n ~ \ckey{od}$. We begin by defining Dijkstra's alternation construct,
$\alternate{i\!\in\!I}{b(i)}{P(i)}$~\cite{Dijkstra75}, which underlies iteration.
\begin{definition}[Guarded Commands, Assumptions, and Alternation]
  \begin{align*}
    b \gcmd P ~\defs~~& \conditional{P}{b}{\Miracle} \\
    [b] ~\defs~~& b \gcmd \Skip \\
    \alternate{i\!\in\!I}{b(i)}{P(i)} ~\defs~~& \textstyle\left(\bigsqcap_{i \in I}~b(i) \gcmd P(i)\right) \sqcap \left(\left(\neg \bigvee_{i \in I}~b(i)\right) \gcmd \Chaos\right)
  \end{align*}
\end{definition}
\noindent $b \gcmd P$ is a ``naked'' guarded command~\cite{Morgan92}. Its behaviour is $P$ when $b$ is true, and
miraculous otherwise, meaning it is impossible to execute. By Theorem~\ref{thm:ncsplaws}, $\Miracle$ is a left
annihilator for sequential composition, and so any following behaviour is excluded when $b$ is false. An assumption
$[b]$ guards $\Skip$ with $b$, and thus holds all variables constant when $b$ is true, and is otherwise
miraculous. $\theoryset{NCSP}$ is closed under both these operators since they are defined only in terms of healthy
elements $\conditional{}{\cdot}{}$, $\Miracle$, and $\Skip$.

Alternation is a nondeterministic choice of guarded commands. When $b(i)$ is true for $i \in I$, $P(i)$ can be
executed. Any command which has $b(i)$ false evaluates to $\Miracle$ and thus is eliminated. If no $b(i)$ is true, then
its behaviour is $\Chaos$. If multiple $b(i)$ are true then one of the corresponding $P(i)$ is nondeterministically
selected. $\theoryset{NCSP}$ is closed under alternation since it comprises only healthy elements. From this definition
we can prove a number of characteristic laws.
\begin{theorem} If, $\forall i @ P(i) \is \healthy{NCSP}$, then the following identities hold:
  \begin{align}
  \alternate{i \in \emptyset}{b(i)}{P(i)}                         ~=~& \Chaos \label{law:altempty} \\
  \alternate{i \in \{k\}}{b(i)}{P(i)}                             ~=~& \conditional{P(k)}{b(k)}{\Chaos} \label{law:altsingle} \\
  \textstyle\casm{\bigvee_{i \in I} \, b(i)} \relsemi \alternate{i \in I}{b(i)}{P(i)} ~=~& \textstyle\left(\bigsqcap_{i \in I}~b(i) \gcmd P(i)\right) \label{law:altasm}
  \end{align}
\end{theorem}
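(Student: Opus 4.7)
The plan is to prove the three identities in turn by unfolding the alternation definition and reducing each side to a combination of the laws in Theorem~\ref{thm:ncsplaws}, the unit/annihilator properties of $\Miracle$ and $\Chaos$ in the refinement lattice, and standard conditional calculus. Throughout, closure under $\healthy{NCSP}$ is preserved by construction, since every ingredient of alternation (conditional, $\sqcap$, $\Miracle$, $\Chaos$, $\Skip$) is already known to be $\healthy{NCSP}$-healthy.

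For \eqref{law:altempty}, I would unfold at $I = \emptyset$: the indexed choice $\bigsqcap_{i \in \emptyset}~b(i) \gcmd P(i)$ is the empty meet, which evaluates to $\Miracle$, while $\neg \bigvee_{i \in \emptyset} b(i)$ is $\mathit{true}$, making the second summand $\mathit{true} \gcmd \Chaos = \Chaos$. Then $\Miracle \sqcap \Chaos = \Chaos$, since $\Miracle$ is the unit of $\sqcap$. Identity \eqref{law:altsingle} follows the same pattern: unfolding at $I = \{k\}$ produces $(b(k) \gcmd P(k)) \sqcap ((\neg b(k)) \gcmd \Chaos)$, and expanding both guarded commands as conditionals with $\Miracle$ in their false branch, a case split on $b(k)$ collapses the meet to $\conditional{P(k)}{b(k)}{\Chaos}$, again relying on $\Miracle$ being the unit of $\sqcap$.

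The interesting case is \eqref{law:altasm}. The plan is to distribute the prefix $\casm{\bigvee_{i \in I} b(i)}$ through the outer $\sqcap$ of the alternation, producing one summand composed with $\bigsqcap_{i} (b(i) \gcmd P(i))$ and another with $(\neg \bigvee_{i} b(i)) \gcmd \Chaos$. The second summand vanishes: using the auxiliary identity $\casm{b} \relsemi (c \gcmd P) = (b \land c) \gcmd P$, it reduces to $\mathit{false} \gcmd \Chaos = \Miracle$, which then disappears as the unit of $\sqcap$. For the surviving summand I would distribute further over the inner $\bigsqcap$ and apply the same auxiliary identity to each branch, yielding $\bigsqcap_{i} ((\bigvee_{j} b(j) \land b(i)) \gcmd P(i))$; Boolean simplification using $b(i) \Rightarrow \bigvee_{j} b(j)$ then collapses each conjunction to $b(i)$, giving the right-hand side.

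The main obstacle is establishing the auxiliary identity $\casm{b} \relsemi (c \gcmd P) = (b \land c) \gcmd P$, which is not listed in Theorem~\ref{thm:ncsplaws}. I expect it to follow by unfolding $\casm{b}$ and $c \gcmd P$ as conditionals, distributing sequential composition through the outer conditional, and combining \eqref{thm:ncspmiracle} (to kill the $\Miracle$ branches) with the left unit law $\Skip \relsemi P = P$, which is standard in $\theoryset{NCSP}$. Once that lemma is in hand, the remaining steps are purely propositional and lattice-theoretic, well within the reach of the \textsf{rel-auto} tactic.
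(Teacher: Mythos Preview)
Your argument is sound in all three parts, and the auxiliary identity $\casm{b} \relsemi (c \gcmd P) = (b \land c) \gcmd P$ is indeed the right hinge for \eqref{law:altasm}; your derivation of it from the conditional form of $\casm{b}$, the left-unit law for $\Skip$, and \eqref{thm:ncspmiracle} goes through without difficulty. The paper itself does not supply an explicit proof of this theorem: it only paraphrases the three laws in prose and defers the actual justification to the Isabelle/UTP mechanisation, where identities of this shape are discharged by the \textsf{rel-auto} tactic after unfolding the definition of alternation. Your proposal is therefore more detailed than anything the paper presents, and can be read as a manual reconstruction of what the relational-calculus tactic would do automatically.
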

\noindent In words, \eqref{law:altempty} shows that alternation over an empty set presents no options, and so is
equivalent to $\Chaos$; \eqref{law:altsingle} shows that a singleton alternation can be rewritten as a binary
conditional; \eqref{law:altasm} shows that, if we assume that one of its branches is true, then an alternation
degenerates to a nondeterministic choice.

We now define guarded iteration as the iteration of the corresponding alternation whilst at least one of the guards
remains true.

\begin{definition}[Guarded Iteration]
  $$\textstyle\doiter{i\!\in\!I}{b(i)}{P(i)} ~\defs~~ \while{\left(\bigvee_{i \in I}~b(i)\right)}{\left(\alternate{i\!\in\!I}{b(i)}{P(i)}\right)}$$
\end{definition}

\noindent We use the reactive while loop ($\while{b}{P}$) to encode the operator, and can thus utilise our previous
results~\cite{Foster18a} to reason about it. In keeping with the reactive programming paradigm, this while loop can
pause during execution to await interaction, and it also need not terminate. However, in order to ensure that the
underlying fixed point can be calculated, we assume that for all $i \in I$, $P(i)$ is productive~\cite{Foster17b}: that
is, it produces at least one event whenever it terminates. This ensures that divergence caused by an infinite loop is
avoided. Iteration is closed under $\theoryset{NCSP}$, since the while loop and alternation both are.

We can now prove the following fundamental refinement law for iteration.

\begin{theorem}[Iteration Induction] \label{thm:iterintro} If, $\forall i @ P(i) \is \healthy{NCSP}$, then:
  $$\begin{array}{c}
      \begin{array}{lll}
        \forall i \in A @ P(i) \text{ is Productive} &\quad& S \refinedby I \relsemi [\bigwedge_{i \in A} \, (\neg b(i))] \\[1ex]
      \forall i \in A @ S \refinedby I \relsemi [b(i)] \relsemi P(i) &\quad& \forall i \in A @ S \refinedby S \relsemi [b(i)] \relsemi P(i)
      \end{array}
      \\[1ex] \\[-1.0em] \hline \\[-1.0em]  
      S \refinedby I \relsemi \doiter{i \in A}{b(i)}{P(i)}
  \end{array}$$
\end{theorem}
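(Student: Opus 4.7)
The plan is to reduce this rule to a standard induction principle for reactive while loops. By the definition of guarded iteration, $\doiter{i \in A}{b(i)}{P(i)} = \while{B}{Q}$ where $B \defs \bigvee_{i \in A} b(i)$ and $Q \defs \alternate{i \in A}{b(i)}{P(i)}$, so it suffices to invoke the while-loop induction law established in previous work~\cite{Foster18a}. Informally, that law states that $S \refinedby I \relsemi \while{B}{Q}$ holds whenever $Q$ is $\healthy{NCSP}$-healthy and its effective body is productive, $S \refinedby I \relsemi [\neg B]$, $S \refinedby I \relsemi [B] \relsemi Q$, and $S \refinedby S \relsemi [B] \relsemi Q$. $\healthy{NCSP}$-closure of $Q$ is immediate, since alternation preserves healthiness.

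The exit obligation follows from the second premise: by De Morgan, $\neg B \equiv \bigwedge_{i \in A} \neg b(i)$, matching $[\bigwedge_{i \in A} \neg b(i)]$ syntactically. For the base and inductive cases I would algebraically reshape $[B] \relsemi Q$. By law~\eqref{law:altasm}, $[B] \relsemi \alternate{i \in A}{b(i)}{P(i)} = \bigsqcap_{i \in A} (b(i) \gcmd P(i))$. A short unfolding of the definitions of guard and assumption, combined with law~\eqref{thm:ncspmiracle}, yields the auxiliary identity $b \gcmd P = [b] \relsemi P$ for any $\healthy{NCSP}$ program $P$, since $[b] \relsemi P = \conditional{(\Skip \relsemi P)}{b}{(\Miracle \relsemi P)} = \conditional{P}{b}{\Miracle}$. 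Composing with $I$ (or $S$) on the left, using left-distribution of $\relsemi$ over $\bigsqcap$, and appealing to the fact that refinement of an infimum splits pointwise, the base obligation collapses to $\forall i \in A @ S \refinedby I \relsemi [b(i)] \relsemi P(i)$, which is exactly the third premise. The inductive obligation is discharged analogously from the fourth premise.

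The main obstacle is the productivity side-condition needed for the underlying least fixed point to exist. The body $Q$ is not productive on its own: if no guard holds, $Q$ reduces to $\Chaos$, which can terminate without producing any event. What must actually be shown is that the \emph{effective} body encountered inside the loop, namely $[B] \relsemi Q = \bigsqcap_{i \in A} (b(i) \gcmd P(i))$, is productive. This ought to follow from two preservation lemmas that I would need to formalise: productivity is preserved by guarding (when $b$ holds the command behaves as the productive $P(i)$, and when $b$ fails it is $\Miracle$, which trivially satisfies the productivity implication) and by nondeterministic choice (any resolved branch is productive by hypothesis). Establishing these two closure results inside Isabelle/UTP, and threading them into the while-loop induction rule of \cite{Foster18a}, is the only non-routine ingredient; everything else is algebraic manipulation using the laws already in hand.
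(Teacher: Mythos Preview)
Your proposal is correct and follows essentially the same route the paper sets up: the paper defines guarded iteration via the reactive while loop precisely so that the induction principle from~\cite{Foster18a} can be invoked, and law~\eqref{law:altasm} is provided exactly to collapse $[B] \relsemi Q$ into the indexed infimum of guarded commands, after which left-distribution of $\relsemi$ over $\bigsqcap$ and pointwise refinement yield the per-branch obligations. Your treatment of productivity---showing it for the effective body $[B] \relsemi Q$ via closure of productivity under guarding and nondeterministic choice, rather than for the raw alternation---is the right refinement of the side-condition and matches what the mechanisation must do.
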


\noindent The law states the provisos under which an iteration, with initialiser $I$, preserves invariant $S$. These
are: (1) every branch is productive; (2) if $I$ causes the iteration to exit immediately then $S$ is satisfied; (3) for
any $i \in A$ if $I$ holds initially, $b(i)$ is true, and $P(i)$ executes, then $S$ is satisfied (base case); and (4)
for any $i \in A$ if $b(i)$ is true, and $P(i)$ executes, then $S$ is satisfied (inductive case). This law forms the
basis for our verification strategy.


\section{Mechanised Reactive Programs}
\label{sec:actions}
In this section we turn our reactive design theory into an Isabelle/HOL type, so that we can use the type system to
ensure well-formedness of reactive programs, which supports our verification strategy. The type allows efficient proof
and use of the simplifier to perform rewriting and also symbolic evaluation so that assignments can be pushed forward
and substitutions applied. We use it to encode both state machine actions in \S\ref{sec:statsem}, and the dynamic
semantics in \S\ref{sec:dynsem}. We first describe a general result for mechanising programs, apply it to reactive
programs, and also introduce a novel operator to express frame extension.


In UTP, all programs are unified by encoding them in the alphabetised relational calculus. Programs in different
languages of various paradigms therefore have a common mathematical form, and can be both compared and semantically
integrated. This idea is retained in Isabelle/UTP by having all programs occupy the type $[\alpha]\uhrel$, with a
suitably specialised alphabet type $\alpha$~\cite{Feliachi2010}.

In Isabelle/UTP, we characterise a theory by (1) an alphabet type $\mathcal{T}$, which may be parametric; and (2) a
healthiness function, $\healthy{H} :: [\mathcal{T}]\uhrel \to [\mathcal{T}]\uhrel$. The theory signature consists of
operators with the form $f_i :: ([\mathcal{T}]\uhrel)^k \to [\mathcal{T}]\uhrel$, each of which is accompanied by a
proven closure theorem
$$\textit{f-H-closed}: \llbracket P_1 \is \healthy{H}; \cdots ; P_k \is \healthy{H} \rrbracket \Longrightarrow f(P_1, \cdots, P_k) \is \healthy{H}$$
which ensures that the operator constructs healthy elements, provided its parameters are all healthy. For example, the
reactive design theory has a theorem $\llbracket P \is \healthy{NCSP}; Q \is \healthy{NCSP} \rrbracket \Longrightarrow (P \relsemi Q) \is \healthy{NCSP}$,
which demonstrates that sequential composition is in the signature. Theories also typically have algebraic laws, like
those in Theorem~\ref{thm:ncsplaws}, which can be applied to reasoning about programs and thence to produce
verification tools~\cite{Foster17c,Foster18a}.

This approach has several advantages for theory engineering~\cite{Hoare&98,Cavalcanti&06}. There is a unified notion of
refinement that can be applied across semantic domains. Operators like nondeterminsitic choice ($\sqcap$) and sequential
composition ($\relsemi$) can occupy several theories, which facilitates generality and semantic integration. General
algebraic laws can be proved, and then directly reused in more specialised UTP theories. The UTP approach means that
theories can be both combined and extended for a wide variety of computational paradigms and languages.

However, there is a practical downside, which is that the programming theorems, such as those in
Theorem~\ref{thm:ncsplaws}, require healthiness of the constituent parameters, and therefore it is necessary to
first invoke the closure theorems. In the context of verification, constantly proving closure can be very inefficient,
particularly for larger programs. This is because Isabelle's simplifier works best when invoked with pure equations
$f(x_1, \cdots, x_n) \equiv y$ with minimal provisos. 


Our solution uses the Isabelle type system to shoulder the burden of closure proof. We use the \ckey{typedef} mechanism,
which creates a new type $T$ from a non-empty subset $A :: \power(U)$ of existing type $U$. For a UTP theory, we create
a type with $A = \theoryset{H}$, which is a subset of the UTP relations. This then allows optimised proof for a
particular UTP theory, but at the cost of generality and semantic extensibility which are more suited to the UTP
relational domain.

In order to obtain the signature for the new type, we utilise the lifting package~\cite{Huffman13}, whose objective is
to define operators on $T$ in terms of operators on $U$, provided that $A$ is closed under each operator. Specifically,
if $f$ is a signature operator in $k$ arguments, then we can create a lifted operator $\widehat{f} :: T^k \to T$ using
Isabelle's \ckey{lift-definition} command~\cite{Huffman13}. This raises a proof obligation that
$f \in \theoryset{H}^k \to \theoryset{H}$, which can be discharged by the corresponding closure theorem. Programs
constructed from the lifted operators are well-formed by construction.

Finally, to lift the algebraic theorems for each lifted operator $\widehat{f}$, we use the \textit{transfer}
tactic~\cite{Huffman13}. It allows us to prove theorems like
$\widehat{f}(P_1, \cdots, P_k) = \widehat{g}(P_1, \cdots, P_k)$, where $P_i :: T$ is a free variable, by converting it
to a theorem of the form
$\llbracket Q_1 \is \healthy{H}; \cdots; Q_k \is \healthy{H} \rrbracket \Longrightarrow f(Q_1, \cdots, Q_k) = g(Q_1,
\cdots, Q_k)$.
This means the closure properties of each parameter $Q_i$ can be utilised in disharging provisos of the corresponding
UTP theorems, but the lifted theorems do not require them. We will now use this technique for our reactive program type.



The reactive designs alphabet is $[s, e]\textit{st-csp}$, for state space $s$ and event type
$e$. \ckey{NCSP}~\cite{Foster18a}, of type $[[s, e]\textit{st-csp}]\uhrel \to [[s, e]\textit{st-csp}]\uhrel$,
characterises the theory. We use it to define the reactive program type, $[s,e]\Action$ and lift each theory operator
from \S\ref{sec:prelim} and \S\ref{sec:foundations}. For example, guard is a function $(b\!\guard\! P) :: [s,e]\Action$,
for $b :: [s]\upred$ and $P :: [s,e]\Action$. For the action language, we define basic events $e \defs \Do{e}$, send
$e!v \defs \Do{e.v}$, and receive $e?x \defs \Extchoice v @ \Do{e.v} \relsemi x\!:=\!v$. From these lifted definitions,
and using the \textit{transfer} tactic, all the laws in Theorems~\ref{thm:ncsplaws} and \ref{thm:iterintro} can be
recast for the new operators, but without closure conditions. We then prove substitution laws for
$\substapp{\sigma}{P}$, where $\sigma :: s \to s$ and $P :: [s,e]\Action$, which can be used for symbolic evaluation.
\begin{theorem}[Symbolic Evaluation Laws] \label{thm:symeval}

  \vspace{-2ex}

  \begin{minipage}{0.35\textwidth}
    \begin{align*}
    \substapp{\sigma}{[b]} =~& [\substapp{\sigma}{b}] \relsemi \assigns{\sigma} \\
    \substapp{\sigma}{(P \relsemi Q)} =~& (\substapp{\sigma}{P}) \relsemi Q \\
    \substapp{\sigma}{\assigns{\rho}} =~& \assigns{\rho \circ \sigma}
    \end{align*}
  \end{minipage}
  \begin{minipage}{0.65\textwidth}
    \begin{align*}
    \substapp{\sigma}{(P \extchoice Q)} =~& (\substapp{\sigma}{P}) \extchoice (\substapp{\sigma}{Q}) \\
    \substapp{\sigma}{(b \guard P)} =~& (\substapp{\sigma}{b}) \guard (\substapp{\sigma}{P}) \\
    \substapp{\sigma}{e!v} =~& e!(\substapp{\sigma}{v}) \relsemi \assigns{\sigma}
    \end{align*}
  \end{minipage}
\end{theorem}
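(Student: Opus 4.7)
The plan is to prove each equation by invoking the \textit{transfer} tactic on the lifted type $[s,e]\Action$, which reduces every goal to a statement about NCSP-healthy relations where the healthiness of each component is available as an explicit hypothesis. The central lever is then Theorem~\ref{thm:ncsplaws}, and in particular \eqref{thm:ncspasn}, which allows $\substapp{\sigma}{P}$ to be rewritten as $\assigns{\sigma} \relsemi P$ for any healthy $P$. On every left-hand side the substitution is first materialised as a leading generalised assignment, and the work reduces to pushing that assignment through the surrounding constructor.

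For the structural cases this push-through is routine. The assumption law $\substapp{\sigma}{[b]} = [\substapp{\sigma}{b}] \relsemi \assigns{\sigma}$ follows by unfolding $[b] = b \gcmd \Skip$, distributing $\assigns{\sigma}$ through the conditional implicit in the guarded command, and annihilating the miraculous branch via \eqref{thm:ncspmiracle}. Sequential composition is discharged by associativity of $\relsemi$ together with a second use of \eqref{thm:ncspasn}. The case $\substapp{\sigma}{\assigns{\rho}} = \assigns{\rho \circ \sigma}$ is the standard composition law for generalised assignments. External choice reduces to a pair-specialisation of \eqref{thm:ncspdist}, and the guard case is the same distributivity applied inside the conditional unfolding of $b \guard P$.

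The main obstacle is the output-event equation $\substapp{\sigma}{e!v} = e!(\substapp{\sigma}{v}) \relsemi \assigns{\sigma}$. Here the leading assignment must be commuted past a communication, and the value carried on the wire has simultaneously to be updated from $v$ to $\substapp{\sigma}{v}$; this is not a consequence of the algebraic laws in Theorem~\ref{thm:ncsplaws} alone. I would prove it by extracting a separate commutation lemma $\assigns{\sigma} \relsemi \Do{e.v} = \Do{e.(\substapp{\sigma}{v})} \relsemi \assigns{\sigma}$, and discharging it at the reactive-contract level by computing the $\rc{\!P\!}{\!Q\!}{\!R\!}$ form of both sides and applying \textsf{rdes-refine}: the precondition of $\Do{e.v}$ is trivially true, its pericondition offers $e.v$ before any state change, and its postcondition commits the event $e.v$ with the identity update on state, so pre-composition with $\assigns{\sigma}$ substitutes $\sigma$ into the carried value and defers the state update to after the event. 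Once this lemma is in place the $e!v$ equation is immediate.

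The payoff of the lifting construction of \S\ref{sec:actions} is that, after transfer, none of the resulting statements retain any residual healthiness proviso: the type $[s,e]\Action$ has already absorbed them, so the lifted equations can be fed directly to the simplifier as unconditional rewrites suitable for symbolic evaluation of reactive programs.
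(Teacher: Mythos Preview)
The paper does not give an in-text proof of this theorem; it only states that all results are mechanically validated in Isabelle/UTP and that the lifted laws are obtained via \textit{transfer}. Your overall plan---transfer to the underlying $\healthy{NCSP}$ relations, invoke \eqref{thm:ncspasn} to materialise $\substapp{\sigma}{P}$ as a leading $\assigns{\sigma}$, then commute that assignment through each constructor---is exactly the shape one would expect the mechanised proofs to take, and for the sequential, assumption, generalised-assignment, and output-event cases your reasoning is sound.

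There is, however, a concrete error in your justification for the external-choice law. You claim that $\substapp{\sigma}{(P \extchoice Q)} = (\substapp{\sigma}{P}) \extchoice (\substapp{\sigma}{Q})$ ``reduces to a pair-specialisation of \eqref{thm:ncspdist}'', but \eqref{thm:ncspdist} is the law $(\Do{a}\extchoice\Do{b})\relsemi P = \Do{a}\relsemi P \extchoice \Do{b}\relsemi P$: it distributes a \emph{right} sequential factor through an external choice whose branches are restricted to $\Do{\cdot}$ actions. What you actually need after applying \eqref{thm:ncspasn} is $\assigns{\sigma}\relsemi(P \extchoice Q) = (\assigns{\sigma}\relsemi P)\extchoice(\assigns{\sigma}\relsemi Q)$, i.e.\ \emph{left} distribution of an assignment through an arbitrary external choice. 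That identity does hold---because $\assigns{\sigma}$ is instantaneous and deterministic, it neither refuses events nor terminates early in a way that could resolve the choice---but it is not an instance of \eqref{thm:ncspdist} and must be established separately (in the mechanisation it follows directly at the reactive-contract level, or by \textsf{rel-auto}). The same remark applies to your guard case, which you say uses ``the same distributivity'': what is really needed there is left distribution of $\assigns{\sigma}$ through a conditional together with $\substapp{\sigma}{\Stop} = \Stop$, neither of which is supplied by \eqref{thm:ncspdist}.
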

\noindent These laws show how substitution applies and distributes through the operators. In combination with the
assignment law of Theorem~\ref{thm:ncsplaws}(\ref{thm:ncspasn}), they can be used to apply state updates. For example,
one can automatically prove that
$$(x := 2 \relsemi y := (3 * x) \relsemi e!(x + y)) ~~ = ~~ (e!8 \relsemi \assigns{x \mapsto 2, y \mapsto 6})$$ since we can combine the assignments and push them through the send event.


To denote state machines, we need a special variable ($actv$) to record the currently active node. This is semantic
machinery, and no action is permitted access to it. We impose this constraint via frame extension:
$\frext{a}{P} :: [s_1, e]Action$, for $a :: s_2 \lto s_1$ and $P :: [s_2, e]Action$, that extends the alphabet of $P$. It
is similar to a frame in refinement calculus~\cite{Morgan92}, which prevents modification of variables, but also uses
the type system to statically prevent access to them. Lens $a$ identifies a subregion $\alpha$ of the larger alphabet
$\beta$, that $P$ acts upon. Intuitively, $\alpha$ is the set of state machine variables, and $\beta$ this set extended
with $actv$. $P$ can only modify variables within $\alpha$, and others are held constant. We prove laws for this
operator, which are also be used in calculating the semantics.

\begin{theorem}[Frame Extension Laws]
  \vspace{-3ex}

  $$\frext{a}{P\!\relsemi\!Q} =~ \frext{a}{P}\!\relsemi\!\frext{a}{Q} \quad \frext{a}{e?x} =~ e?(\lns{a}{x}) \quad \frext{a}{x := v} =~ \lns{a}{x} := v$$

\end{theorem}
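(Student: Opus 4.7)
The plan is to reduce each of the three equalities to a statement in the underlying relational calculus via the \emph{transfer} tactic described in \S\ref{sec:actions}, then unfold the definition of frame extension and dispatch the resulting goals by relational reasoning together with lens laws. In particular, $\frext{a}{P}$, at the relational level, pairs initial and final states in the larger alphabet whose $a$-projections are related by $P$, while forcing the complement of $a$ to be unchanged between pre- and post-states. After transfer, we obtain $\healthy{NCSP}$-healthiness assumptions on the free variables, which will allow us to invoke the laws from Theorem~\ref{thm:ncsplaws} where needed.

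For the sequential composition law, I would introduce the existentially quantified intermediate state on both sides and observe that the $a$-complement in $\frext{a}{P}\relsemi\frext{a}{Q}$ is held constant across both halves, hence equals both the initial and final values by transitivity, matching exactly the single-frame condition on $\frext{a}{P\relsemi Q}$. This alignment is then closed by \textsf{rel-auto} together with the get/put round-trip and independence laws for lenses. For the assignment law, after transfer and unfolding $\assigns{\lupdate{x\mapsto v}}$, the equality $\lput_a(s, \lput_x(\lget_a(s), v)) = \lput_{\lns{a}{x}}(s, v)$ is definitional for lens composition, so the goal collapses immediately.

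For the input event law, I would first unfold $e?x$ using its definition $\Extchoice v @ \Do{e.v} \relsemi x:=v$, distribute frame extension through the external choice (an auxiliary lemma, provable analogously to the sequential case since $\extchoice$ is pointwise on outputs), distribute through $\relsemi$ by the first identity of this theorem, and finally use the assignment law together with the fact that $\frext{a}{\Do{e.v}} = \Do{e.v}$ whenever $v$ contains no variables of the inner state, which is true here since $v$ is a bound outer variable.

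The main obstacle will be the sequential composition law: getting Isabelle to rewrite the nested existential over the intermediate state in a way that exposes the transitivity of the ``complement unchanged'' condition typically requires either a carefully chosen relational normal form or explicit lemmas about lens-update chains. Once that law is proved, the other two identities follow relatively mechanically from lens composition properties and the symbolic evaluation laws of Theorem~\ref{thm:symeval}.
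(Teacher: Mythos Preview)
The paper does not actually present a proof for this theorem in the text; it states the identities, gives an informal reading of them, and defers all proofs to the Isabelle/UTP mechanisation (``All the theorems and results that we present in this paper have been mechanically validated in Isabelle/UTP, and the proofs can be found in our repository''). Your proposal --- lift to the underlying relational level via \emph{transfer}, unfold the frame-extension operator, and close with \textsf{rel-auto} together with the lens get/put and composition laws --- is precisely the proof pattern the paper advertises in \S\ref{sec:actions} for laws of this kind, and is almost certainly what the mechanised proof does. The decomposition you sketch for $e?x$ (unfold to $\Extchoice v @ \Do{e.v}\relsemi x:=v$, distribute frame extension through $\extchoice$ and $\relsemi$, then invoke the assignment case) is the natural route and requires only the auxiliary distribution of frame extension over external choice, which falls to the same tactic. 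In short, your plan matches the paper's methodology; there is simply no in-paper proof to compare against at a finer level of detail.
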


\noindent Frame extension distributes through sequential composition. For operators like event receive and assignment,
the variable is extended by the lens $a$, which is like a namespace operator ($\lns{a}{x}$). Specifically, it manipulates
the region characterised by $x$ within the region of $a$. This completes the mechanised reactive language.

\section{Static Semantics}
\label{sec:statsem}
In this section we formalise a state machine meta-model in Isabelle/HOL, which describes the variables, transitions, and
nodes. The meta-model, presented below, is based on the untimed subset of RoboChart, but note that our use of UTP
ensures that our work is extensible to more advanced semantic domains~\cite{Sherif2010,Bresciani2012,Foster17b}. For now
we omit constructs concerned with interfaces, operations, shared variables, during actions, and hierarchy, and focus on
basic machines.

\begin{definition}[State Machine Meta-Model]
  \vspace{1ex}

  \noindent
  \begin{tabular}{rl}
    StMach :=& \ckey{statemachine}~ID~ \\
                   & ~~ \ckey{vars} NameDecl* \ckey{events} NameDecl*  \ckey{states} NodeDecl*\\ 
                   & ~~  \ckey{initial} ID \ckey{finals} ID* \ckey{transitions} TransDecl* \\[1ex]
    NameDecl      :=& ID [\ckey{:} Type] \\[1ex]
    NodeDecl     :=& ID \ckey{entry} Action \ckey{exit} Action \\[1ex]
    TransDecl    :=& ID \ckey{from} ID \ckey{to} ID \ckey{trigger} Event \ckey{condition} Expr \ckey{action} Action
  \end{tabular}
\end{definition}

\noindent A state machine is composed of an identifier, variable declarations, event declarations, state declarations,
an initial state identifier, final state identifiers, and transition declarations. Each variable and event consists of a
name and a type. A state declaration consists of an identifier, entry action, and exit action. A transition declaration
consists of an identifier, two state identifiers for the source and target nodes, a trigger event, a condition, and a
body action. Whilst we do not directly consider hierarchy, this can be treated by flattening out substates.

We implement the meta-model syntax using Isabelle's parser, and implement record types $[s, e]\textit{Node}$ and
$[s, e]\textit{Transition}$, that correspond to the \textit{NodeDecl} and \textit{TransDecl} syntactic categories. They
are both parametric over the state-space $s$ and event types $e$. \textit{Node} has fields $nname :: \textit{string}$,
$nentry :: [s, e]\Action$, and $nexit :: [s,e]\Action$, that contain the name, entry action, and exit
action. \textit{Transition} has fields $src :: \textit{string}$, $tgt :: \textit{string}$, $trig :: [s, e]\Action$,
$cond :: [s]\upred$, and $act :: [s, e]Action$, that contain the source and target, the trigger, the condition, and the
body. We then create a record type to represent the state machine.

\begin{definition}[State Machine Record Type]
  $$\ckey{record}~[s, e]\textit{StMach} ~~=~~ \begin{array}{l}
     \begin{array}{ll} init :: ID & finals :: [ID]list \\[.5ex]
        nodes :: [[s, e]Node]list \quad& transs :: [[s, e]Transition]list
      \end{array}
    \end{array}$$
\end{definition}

\noindent It declares four fields for the initial state identifier (\textit{init}), final states identifiers
(\textit{finals}), nodes definitions (\textit{nodes}), and transition definitions (\textit{transs}), and constitutes the
static semantics. Since this corresponds to the meta-model, and to ensure a direct correspondence with the parser, we do
not directly use sets and maps, but only lists in our structure. We will later derive views onto the data structure
above, that build on well-formedness constraints.

Below, we show how syntactic machines are translated to Isabelle definitions.

\begin{definition}[Static Semantics Translation]
$$
  \begin{array}{l}
    \ckey{statemachine}~s \\[1ex]
    ~ \ckey{vars}~x_1 : \tau^v_1 ~\cdots~ x_i : \tau^v_i \\[1ex]
    ~ \ckey{events}~e_1 : \tau^e_1 ~\cdots e_j~ : \tau^e_j \\[1ex]
    ~ \ckey{states}~s_1 \cdots s_k ~ \ckey{initial}~ini \\[1ex]
    ~ \ckey{finals}~f_1 \cdots f_m \\[1ex]
    ~ \ckey{transitions}~t_1 \cdots t_n
  \end{array}
  ~~~\Longrightarrow~~~~~ \begin{array}{l}
  \ckey{alphabet}~\text{s-alpha} = x_1 : \tau^v_1 ~\cdots~ x_i : \tau^v_i \\[1ex]
  \ckey{datatype}~\text{s-ev} = \nulle | e_1~t^e_1 | \cdots | e_j~t^e_j \\[1ex]
  \ckey{definition}~ machine :: [\text{s-alpha}, \text{s-ev}]\text{StMach} \\[1ex]
  \ckey{where}~machine = \begin{array}{l}
                           \llparenthesis init = ini, \\ 
                              ~finals = [f_1 \cdots f_m], \\
                              ~states = [s_1 \cdots s_k], \\
                              ~transs = [t_1 \cdots t_n] \rrparenthesis
                          \end{array} \\ \\[-2.5ex]
  \ckey{definition}~ semantics = \msem{machine}
\end{array}
$$
\end{definition}
%
%
\noindent For each machine, a new alphabet is created, which gives rise to a HOL record type \textit{s-alpha}, and
lenses for each field of the form $t^v_i \lto \textit{s-alph}$. For the events, an algebraic datatype \textit{s-ev} is
created with constructors corresponding to each of them. We create a distinguished event $\nulle$ that will be used in
transitions with explicit trigger and ensures productivity. The overall machine static semantics is then contained in
\textit{machine}. We also define \textit{semantics} that contains the dynamic semantics in terms of the semantic
function $\msem{\cdot}$ that we describe in \S\ref{sec:dynsem}.

Elements of the meta-model are potentially not well-formed, for example specifying an initial state without a
corresponding state declaration, and therefore it is necessary to formalise well-formedness. RoboTool enforces a number
of well-formedness constraints~\cite{Miyazawa2018RoboChart}, and we here formalise the subset needed to ensure the
dynamic semantics given in \S\ref{sec:dynsem} can be generated. We need some derived functions for this, and so we
define $nnames \defs set(map~nname~(nodes))$, which calculates the set of node names, and $fnames$, which calculates the
set of final node names. We can now specify our well-formedness constraints.
\begin{definition} \label{def:wfsm} A state machine is well-formed if it satisfies these constraints:
  \begin{enumerate}
    \item \textnormal{Each node identifier is distinct:} $distinct(map~nname~(nodes))$
    \item \textnormal{The initial identifier is defined:} $init \in nnames$
    \item \textnormal{The initial identifier is not final:} $init \notin fnames$
    \item \textnormal{Every transition's source node is defined and non-final:} \\ $\forall t \in transs @ src(t) \in nnames \setminus fnames$
    \item \textnormal{Every transition's target node is defined:} $\forall t \in transs @ tgt(t) \in nnames$
  \end{enumerate}
\end{definition}
\noindent We have implemented them in Isabelle/HOL, along with a proof tactic called \textit{check-machine} that
discharges them automatically when a generated static semantics is well-formed, and ensure that crucial theorems are
available to the dynamic semantics. In practice, any machine accepted by RoboTool is well-formed, and so this tactic
simply provides a proof of that fact to Isabelle/HOL.

In a well-formed machine every node has a unique identifier. Therefore, using Definition~\ref{def:wfsm}, we construct
two finite partial functions, $\nmap :: ID \ffun [s, e]\textit{Node}$ and
$\tmap :: ID \ffun [s, e]\textit{Transition}~list$, that obtain the node definition and list of transitions associated
with a particular node identifier, respectively, whose domains are both equal to $\textit{nnames}$. We also define
$\textit{ninit} \defs \nmap~\textit{init}$, to be the definition of the initial node, and \textit{inters} to be the set
of nodes that are not final. Using well-formedness we can then prove the following theorems.
\begin{theorem}[Well-formedness Properties]
  \begin{enumerate}
    \item \textnormal{All nodes are identified:} $\forall n \in set(nodes) @ \nmap~(nname(n)) = n$
    \item \textnormal{The initial node is defined:} $ninit \in set(nodes)$
    \item \textnormal{The name of the initial node is correct:} $nname(ninit) = init$
  \end{enumerate}
\end{theorem}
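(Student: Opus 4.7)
The plan is to prove these three properties in a dependency order, deriving Properties 2 and 3 as easy corollaries of Property 1, which captures the essential content of $\nmap$'s construction. Everything ultimately rests on two pieces of data: the definition of $\nmap$ as the unique finite partial function on $\textit{nnames}$ obtained from the list $\textit{nodes}$, and well-formedness clauses (1) (distinctness of names) and (2) ($\textit{init} \in \textit{nnames}$).

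For Property 1, I would proceed by induction on the list $\textit{nodes}$. The base case is vacuous. In the step case, for the new head node $h$ appended to a tail $\textit{ns}$, distinctness gives $nname(h) \notin set(map~nname~\textit{ns})$; this prevents any collision in the partial function update when $\nmap$ is extended with $h$, so $\nmap~(nname(h)) = h$ and the inductive hypothesis carries over for every $n \in set(\textit{ns})$. The real work here is lining up the definition of $\nmap$ (presumably a fold producing a finite partial function) with the ordinary list structure; this will need a small library of lemmas about the interaction of $distinct$, $map~nname$, and the partial-function-from-list constructor.

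Property 3 then follows almost immediately. Well-formedness clause (2) tells us $\textit{init} \in \textit{nnames}$, so there exists some $n \in set(\textit{nodes})$ with $nname(n) = \textit{init}$. Applying Property 1 to this $n$ gives $\nmap~\textit{init} = \nmap~(nname(n)) = n$, and thus $nname(\textit{ninit}) = nname(n) = \textit{init}$ by the definition $\textit{ninit} \defs \nmap~\textit{init}$. Property 2 is the same witness read differently: the node $n$ just exhibited is an element of $set(\textit{nodes})$ and is definitionally equal to $\textit{ninit}$, so $\textit{ninit} \in set(\textit{nodes})$.

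The main obstacle I expect is not mathematical but bureaucratic: getting Isabelle's simplifier and the finite-partial-function library to cooperate on the step case of Property 1, especially in proving that the partial-function update by a fresh key commutes cleanly with the recursive case. Once that piece of plumbing is in place the remaining reasoning is essentially one-line unfolding. In the Isabelle development I would expect these three facts to be registered as simp rules to be picked up later by the \textit{check-machine} tactic when it discharges obligations arising from the dynamic semantics in \S\ref{sec:dynsem}.
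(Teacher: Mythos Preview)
Your plan is sound: Property~1 is the substantive fact, resting on distinctness of node names (well-formedness clause~1) and the construction of $\nmap$ from the list $\textit{nodes}$; Properties~2 and~3 then fall out from clause~2 ($\textit{init} \in \textit{nnames}$) by instantiating Property~1 at the witnessing node. The induction on $\textit{nodes}$ for Property~1 is the natural argument, and your observation that the only friction is the interaction between the finite-partial-function constructor and list distinctness is accurate.

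There is, however, nothing in the paper to compare against at the level of detail you give. The paper does not present a proof of this theorem in the text: it simply states that ``using well-formedness we can then prove the following theorems'' and defers the actual argument to the Isabelle/UTP mechanisation in the accompanying repository. So your proposal is not so much an alternative to the paper's proof as a reconstruction of what the mechanised proof must amount to. Given the ingredients the paper makes available --- the definition of $\nmap$ as a finite partial function built from $\textit{nodes}$ with domain $\textit{nnames}$, and the well-formedness constraints of Definition~\ref{def:wfsm} --- your decomposition is essentially forced, and one would expect the repository proof to follow the same shape.
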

\noindent These theorems allow us to extract the unique node for each identifier, and in particular for the initial
node. Thus, Isabelle/HOL can parse a state machine definition, construct a static semantics for it, and ensure that this
semantics is both well-typed and well-formed. The resulting Isabelle command is illustrated in
Figure~\ref{fig:rcisabelle} that encodes the \textsf{GasAnalysis} state machine of Figure~\ref{fig:gasanalysis}.

\begin{figure}[t]
  \vspace{-2ex}
  \begin{center}
    \includegraphics[width=\textwidth]{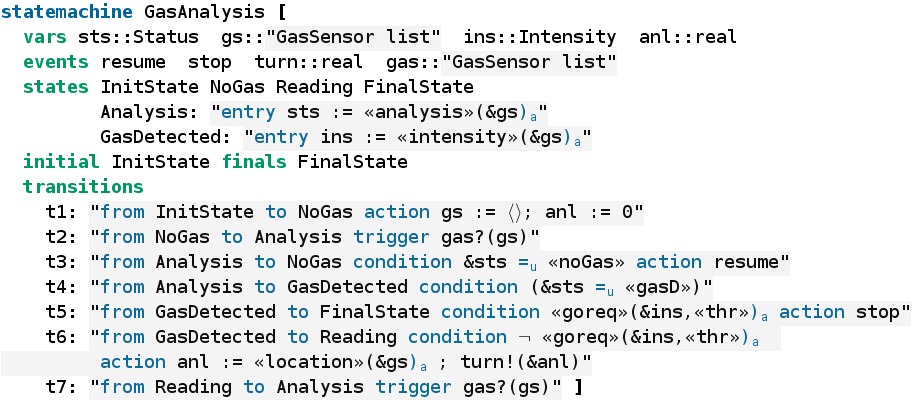}
  \end{center}
  \vspace{-4ex}

  \caption{State machine notation in Isabelle/UTP}
  \label{fig:rcisabelle}

  \vspace{-4ex}
\end{figure}

\section{Dynamic Semantics}
\label{sec:dynsem}
In this section we describe the behaviour of a state machine using the reactive program domain we mechanised in
\S\ref{sec:actions}. The RoboChart reference semantics~\cite{Miyazawa2018RoboChart} represents a state machine as a
parallel composition of CSP processes that represent the individual variables and states. Variable access and state
orchestration are modelled by communications between them. Here, we capture a simpler sequentialised semantics using
guarded iteration, which eases verification. In particular, state variables have a direct semantics, and require no
communication. The relation between these two semantics can be formalised by an automated refinement strategy that
reduces parallel to sequential composition~\cite{Oliveira&09}.

We first define alphabet type $[s]rcst$, parametrised by the state space type $s$, and consisting of lenses
$actv :: ID \lto [s]rcst$ and $\ckey{r} :: s \lto [s]rcst$. The former lens records the currently active state, and the
latter projects the state machine variable space. No action is permitted to refer to $actv$, a constraint that we impose
through the frame extension $\frext{\ckey{r}}{P}$.

We describe the dynamic semantics of a state diagram using three functions.

\begin{definition}[Dynamic Semantics]
  \begin{align*}
    \msem{M} \defs~~& \left(\begin{array}{l}actv := init_M \relsemi \\ \doiter{N\!\in\!set(inters_M)}{actv = nname(N)}{\nsem{M}{N}} 
                      \end{array}\right)
    \\[.5ex]
    \nsem{M}{N} \defs~~ & 
      \begin{array}{l}
        \frext{\ckey{r}}{nentry(N)} \relsemi \left(\Extchoice t\!\in\!\tmap_M(nname(N)) @ \trsem{M,N}{t}\right)
      \end{array} \\[.5ex]
    \trsem{M,N}{t} \defs~~& \frext{\ckey{r}}{cond(t) \guard trig(t) \relsemi nexit(N) \relsemi action(t)} \relsemi actv := tgt(t)
  \end{align*}
\end{definition}

\noindent The function $\msem{\cdot} :: [s, e]StMach \to [[s]rcst, e]Action$ calculates the overall behavioural
semantics. It first sets $actv$ to the initial node identifier, and then enters a do iteration indexed by all non-final
nodes. If a final node is selected, then the iteration terminates. In each iteration, the node $N$ that is named by
$actv$ is selected, and the semantics for it is calculated using $\nsem{M}{N}$.

When in a node, the entry action is first executed using \textit{nentry}, and then an external choice is presented over
all transitions associated with $N$, which are calculated using $\tmap$. The entry and exit actions do not have $actv$
in their alphabet, and therefore we apply frame extensions to them. The semantics of a transition, $\trsem{M,N}{t}$, is
guarded by the transition condition, and awaits the trigger event. Once this occurs, the exit action of $N$ is executed,
followed by the transition action, and finally $actv$ is updated with the target node identifier.

The output of the semantics is an iterative program with one branch for every non-final state. To illustrate, we below
generate the denotational semantics for the \textsf{GasAnalysis} state machine given in Figure~\ref{fig:gasanalysis}.
\begin{example}[\textsf{GasAnalysis} Dynamic Semantics] \label{ex:gadynsem}
  \vspace{-4ex}

  $$\begin{array}{l}
      actv := InitState \relsemi \\[.2ex]
      \ckey{do} \\
      ~~~~\, actv = InitState \then \nulle \relsemi \lns{\ckey{r}}{gs} := \langle\rangle \relsemi \lns{\ckey{r}}{anl} := 0 \relsemi actv := NoGas \\[.2ex]
      ~~| actv = NoGas \then gas?\lns{\ckey{r}}{gs} \relsemi actv := Analysis \\[.2ex]
      ~~| actv = Analysis \then \\ 
          \quad~~ \lns{\ckey{r}}{sts} := analysis(\lns{\ckey{r}}{gs}) \relsemi 
          \left(\begin{array}{l}
             \lns{\ckey{r}}{sts} = noGas \guard \nulle \relsemi resume \relsemi actv := NoGas \\
            \extchoice \lns{\ckey{r}}{sts} = gasD \guard \nulle \relsemi actv := GasDetected 
          \end{array}\right) \\[1ex]
      ~~| actv = GasDetected \then \lns{\ckey{r}}{ins} := intensity(\lns{\ckey{r}}{gs}) \relsemi \\[.5ex]
      \qquad\qquad \left(\begin{array}{l}
               goreq(ins, thr) \guard \nulle \relsemi stop \relsemi actv := FinalState \\
               \extchoice (\neg goreq(ins, thr)) \guard 
                 \nulle \relsemi 
                       \begin{array}{l}
                         \lns{\ckey{r}}{anl} := location(\lns{\ckey{r}}{gs}) \relsemi \\
                         turn!(\lns{\ckey{r}}{anl}) \relsemi actv := Reading
                       \end{array}
                     \end{array}\right) \\[3ex]
      ~~| actv = Reading \then gas?\lns{\ckey{r}}{gs} \relsemi actv := Analysis \\[.5ex]
      \ckey{od}
    \end{array}$$
\end{example}
\noindent In order to yield a more concise definition, we have also applied the action simplification laws given in
\S\ref{sec:actions}. In particular, the frame extensions have all been expanded so that the state variables are
explicitly qualified by lens $\ckey{r}$.

In order to verify such state machines, we need a specialised refinement introduction law. Using our well-formedness
theorem, we can specialise Theorem~\ref{thm:iterintro}.

\begin{theorem} \label{thm:stmref} The semantics of a state machine $M$ refines a reactive invariant specification $S$,
  that is $S \refinedby \msem{M}$, provided that the following conditions hold:
  \begin{enumerate}
    \item $M$ is well-formed according to Definition~\ref{def:wfsm};
    \item the initial node establishes the invariant --- $S \refinedby \nsem{M}{ninit_M}$;
    \item every non-final node preserves $S$ --- $\forall N \in inters_M @ S \refinedby S \relsemi (\nsem{M}{N})$.
  \end{enumerate}
\end{theorem}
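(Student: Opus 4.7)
The plan is to instantiate Theorem~\ref{thm:iterintro} to the unfolded form of $\msem{M}$. Rewriting the definition, $\msem{M} \,=\, (actv := init_M) \relsemi \doiter{N \in set(inters_M)}{actv = nname(N)}{\nsem{M}{N}}$, so I would take $I = (actv := init_M)$, index set $A = set(inters_M)$, guards $b(N) = (actv = nname(N))$, and bodies $P(N) = \nsem{M}{N}$. The goal then reduces to establishing the four provisos of the iteration induction law.

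For productivity of each $\nsem{M}{N}$, I would exploit the $\nulle$ event introduced by the static-semantics translation: every transition semantics $\trsem{M,N}{t}$ either contains a user-supplied trigger or the freshly generated $\nulle$ event, and productivity of external choice and sequential composition is closed under these constructors in $\theoryset{NCSP}$. The immediate-exit proviso, $S \refinedby I \relsemi [\bigwedge_{N \in A}\neg b(N)]$, is vacuous: by well-formedness clauses 2 and 3, $init_M \in \textit{nnames} \setminus \textit{fnames}$, so $ninit_M \in \textit{inters}_M$ with $nname(ninit_M) = init_M$; then after $actv := init_M$ the conjunction $\bigwedge_{N} actv \ne nname(N)$ is false, making the assumption infeasible and the whole composition equivalent to $\Miracle$, which refines $S$ trivially.

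For the base case, $\forall N \in A.\; S \refinedby (actv := init_M) \relsemi [actv = nname(N)] \relsemi \nsem{M}{N}$, I split on whether $N = ninit_M$. Using Theorem~\ref{thm:ncsplaws}(\ref{thm:ncspasn}) to push the assignment through the assumption yields $[init_M = nname(N)]$ on the right; by the uniqueness of node names (well-formedness clause 1), this assumption is $\Skip$ only when $N = ninit_M$ and $\Miracle$ otherwise. In the $\Miracle$ case the refinement is trivial, and in the remaining case the expression collapses to (a substitution-equivalent form of) $\nsem{M}{ninit_M}$, since no subterm of $\nsem{M}{N}$ reads $actv$ before it is overwritten by some $actv := tgt(t)$; this matches hypothesis 2 exactly. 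The inductive case follows from hypothesis 3 by the monotonicity of $\relsemi$: inserting the assumption $[actv = nname(N)]$ can only refine $S \relsemi \nsem{M}{N}$, because $[b] \relsemi Q \refinedby Q$ for any $b$, and so $S \refinedby S \relsemi \nsem{M}{N} \refinedby S \relsemi [b(N)] \relsemi \nsem{M}{N}$.

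I expect the main obstacle to be the bookkeeping around $actv$ in the base case, specifically showing that $(actv := init_M) \relsemi [actv = init_M] \relsemi \nsem{M}{ninit_M}$ is equivalent to (or refined by) $\nsem{M}{ninit_M}$ in a way that lets hypothesis 2 discharge the proviso. This requires both the substitution law and a short argument that the frame extensions $\frext{\ckey{r}}{\cdot}$ in $\nsem{M}{ninit_M}$ preserve $actv$, so the substitution $[init_M/actv]$ is either absorbed into the trailing $actv := tgt(t)$ assignments or acts as the identity on the observable trace. Once that algebraic step is justified, everything else is a direct composition of Theorem~\ref{thm:iterintro} with the well-formedness properties listed after Definition~\ref{def:wfsm}.
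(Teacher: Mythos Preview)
Your proposal is correct and follows exactly the paper's approach: the paper's own proof is the one-liner ``By application of Theorem~\ref{thm:iterintro}, and utilising trigger productivity,'' and you have simply spelled out how each of the four provisos of that theorem is discharged using well-formedness and the $\nulle$ trigger. One minor slip: in the inductive case you write $[b] \relsemi Q \refinedby Q$, but the correct direction is $Q \refinedby [b] \relsemi Q$ (adding an assumption refines, since $\Miracle$ is top); your final chain $S \refinedby S \relsemi \nsem{M}{N} \refinedby S \relsemi [b(N)] \relsemi \nsem{M}{N}$ is nonetheless the right one.
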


\begin{proof}
  By application of Theorem~\ref{thm:iterintro}, and utilising trigger productivity. \qed
\end{proof}


\noindent We now have all the infrastructure needed for verification of state machines, and in the next section we
describe our verification strategy and tool.

\section{Verification Approach}
\label{sec:verify}
In this section we use the collected results presented in the previous sections to define a verification strategy for
state machines, and exemplify its use in verifying deadlock freedom. Our approach utilises Theorem~\ref{thm:stmref} and
our contractual refinement tactic, \textsf{rdes-refine}, to prove that every state of a state machine satisfies a given
invariant, which is specified as a reactive contract. The overall workflow for description and verification of a state
machine is given by the following steps:

\begin{enumerate}
  \item parse, type check, and compile the state machine definition;
  \item check well-formedness (Definition~\ref{def:wfsm}) using the \textit{check-machine} tactic;
  \item calculate denotational semantics, resulting in a reactive program;
  \item perform algebraic simplification and symbolic evaluation (Thms~\ref{thm:ncsplaws},
    \ref{thm:symeval});
  \item apply Theorem~\ref{thm:stmref} to produce sequential refinement proof obligations;
  \item apply \textsf{rdes-refine} to each goal, which may result in residual proof obligations;
  \item attempt to discharge each remaining proof obligation using \textsf{sledgehammer}~\cite{Blanchette2011}.
\end{enumerate}

\noindent Diagrammatic editors, like RoboTool, can be integrated with this by implementing a serialiser for the
underlying meta-model. The workflow can be completely automated since there is no need to enter manual proofs, and the
final proof obligations are discharged by automated theorem provers. If proof fails, Isabelle/HOL has the
\textsf{nitpick}~\cite{Blanchette2011} counterexample generator that can be used for debugging. This means that the
workflow can be hidden behind a graphical tool.

We can use the verification procedure to check deadlock freedom of a state machine using the reactive contract
$\textit{\textsf{dlockf}} \defs \textstyle\rc{\true}{\exists e @ e \notin \refu}{\true}$, an invariant specification
which states that in all quiescent observations, there is always an event that is not being refused. In other words, at
least one event is always enabled; this is the meaning of deadlock freedom. We can use this contract to check the
\textsf{GasAnalysis} state machine. For a sequential machine, deadlock freedom means that it is not possible to enter a
state and then make no further progress. Such a situation can occur if the outgoing transitions can all be disabled
simultaneously if, for example, their guards do not cover all possibilities.

\begin{figure}[t]
  \vspace{-2ex}
  
  \begin{center}
    \includegraphics[width=\textwidth]{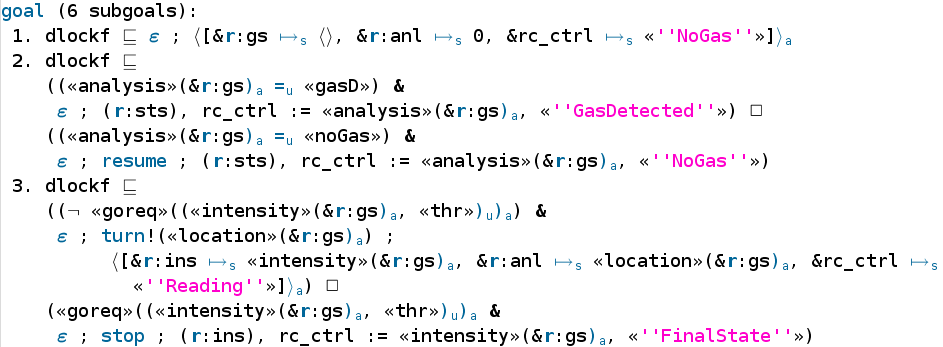}
  \end{center}
  \vspace{-3.5ex}

  \caption{Selection of deadlock freedom proof obligations in Isabelle/UTP}
  \label{fig:rcproofs}

  \vspace{-3.5ex}
\end{figure}

The result of applying the verification procedure up to step 5 is shown in Figure~\ref{fig:rcproofs}. At this stage, the
semantics for each node has been generated, and deadlock freedom refinement conjectures need to be proved. Isabelle
generates 6 subgoals, 3 of which are shown, since it is necessary to demonstrate that the invariant is satisfied by the
initial state and each non-final state. The first goal corresponds to the initial state, where no event occurs and the
variables $gs$ and $anl$, along with $actv$, are all assigned. The second goal corresponds to the \textsf{Analysis}
state. The state body has been further simplified from the form shown in Figure~\ref{ex:gadynsem}, since symbolic
evaluation has pushed the entry action through the transition external choice, and into the two guards. This is also the
case for the third goal, which corresponds to the more complex \textsf{GasDetected} state.

The penultimate step applies the \textsf{rdes-refine} tactic to each of the 6 goals. This produces 3 subgoals for each
goal, a total of 18 first-order proof obligations, and invokes the relational calculus tactic \textsf{rel-auto} on each
of them. The majority are discharged automatically, but in this case three HOL predicate subgoals remain. One of them
relates to the \textsf{Analysis} state, and requires that the constructors $noGas$ and $gasD$ of $Status$ are the only
cases for $sts$. If there was a third case, there would be a deadlock as the outgoing transition guards don't cover
this.

Finally, we execute \textsf{sledgehammer} on each of the three goals, which provides proofs and so completes the
deadlock freedom check. Thus, we have engineered a fully automated deadlock freedom prover for state machines.

  


\section{Conclusions and Related Work}
\label{sec:concl}
In this paper we have presented a verification strategy for state machines in Isabelle/UTP by utilising the theory of
stateful-failure reactive designs, and automated proof facilities. We have extended our UTP theory with the guarded
iteration construct, which is the foundation of sequential state machines, proved a crucial induction law, and adapted
it to an efficient implementation of reactive programs. We have created a static semantics of state machines in
Isabelle/HOL, including well-formedness checks, and a dynamic semantics that generates a reactive program. Finally, we
used this to describe a verification approach that utilises reactive contract refinement and iterative induction.

In future work, we will expand our semantics to handle additional features of RoboChart. Hierarchy, can be handled by
having the $actv$ variable hold a list of nodes, and during actions by implementing a reactive interruption
operator~\cite{McEwan06}. Moreover, we are developing reasoning facilities for parallel composition and hiding to allow
expression of concurrent state machines, which extends our existing work~\cite{Foster17c,Foster18a}. This will greatly
increase verification capabilities for robotic and component-based systems, allow us to handle asynchronous
communication and shared variables, and also to mechanise the CSP reference semantics~\cite{Miyazawa2018RoboChart}.

A challenge that remains is handling assumptions and guarantees between parallel components, but we believe that
abstraction of state machines to invariants, using our results, can make this tractable. We will also explore other
reasoning approaches, such as use of the simplifier to algebraically transform state machines to equivalent forms. Going
further, we emphasise that our UTP theory hierarchy supports more advanced semantic paradigms. We will therefore develop
a mechanised theory of timed reactive designs, based on existing work~\cite{Sherif2010,Foster17b}, and use this to
denote the timing constructs of RoboChart state machines. We are developing a UTP theory of
probability~\cite{Bresciani2012}, and will use it to handle probabilistic junctions. We also have a theory of hybrid
reactive designs~\cite{Foster16b,Foster17b}, which we believe can be used to support hybrid state machines.


In related work, while a number of state machine notations exist, such as UML and Stateflow, to the best of our
knowledge, they provide limited support for formal verification by theorem proving. While formalisations have been
proposed~\cite{Schafer2001,Miyazawa2012}, they typically address a subset of the target notation or focus on model
checking. Other approaches such as~\cite{Foughali2016}, similarly restrict themselves to model checking or other forms
of automatic verification, which have limitations on both the types of systems that can be analysed (mostly finite) and
the kinds of properties that can be checked (schedulability, temporal logic, etc). We differ in that our approach is
extensible, fully automated, and can handle infinite state systems with non-trivial types. Also, our verification laws
have been mechanically validated with respect only to the axioms of Isabelle/HOL.

\vspace{1ex}

\noindent \textbf{Acknowledgements.} This work is funded by the EPSRC projects RoboCalc\footnote{RoboCalc Project:
  \url{https://www.cs.york.ac.uk/circus/RoboCalc/}} (Grant EP/M025756/1) and CyPhyAssure\footnote{CyPhyAssure Project: \url{https://www.cs.york.ac.uk/circus/CyPhyAssure/}} (Grant EP/S001190/1), and the
Royal Academy of Engineering.

\bibliographystyle{splncs}
\bibliography{FACS2018}

\end{document}